\pgfplotsset{compat = newest}
\newtheorem{thm}{Theorem}
\newtheorem{lem}{Lemma}
\newtheorem{cor}{Corollary}
\newtheorem{propy}{Property}
\newcommand\tran{\mkern-2mu\raise1.25ex\hbox{$\scriptscriptstyle\top\hspace{0.5mm}$}\mkern-3.5mu}
\newcommand{\R}{\mathbb{R}}
\newcommand{\N}{\mathbb{N}}
\newcommand{\C}{\mathcal{C}}
\newcommand{\D}{\mathcal{D}}
\newcommand{\X}{\mathcal{X}}
\newcommand{\bm}[1]{{\boldsymbol{#1}}}
\DeclareMathOperator{\diag}{diag}
\DeclareMathOperator{\var}{var}
\DeclareMathOperator{\prob}{p}
\newcommand{\GP}{\mathcal{GP}}
\newcommand{\z}{\bm z}
\newcommand{\x}{\bm x}
\newcommand{\m}{\bm m}
\newcommand{\dx}{\dot{\bm x}}
\newcommand{\f}{\bm{f}}
\renewcommand{\u}{\bm{u}}
\crefname{rem}{Remark}{Remarks}
\crefname{exam}{Example}{Examples}
\crefname{assum}{Assumption}{Assumptions}
\crefname{prop}{Proposition}{Propositions}
\crefname{propy}{Property}{Properties}
\crefname{cor}{Corollary}{Corollaries}
\crefname{lem}{Lemma}{Lemmas}
\crefname{section}{Section}{Sections}
\crefname{thm}{Theorem}{Theorems}
\crefname{alg}{Algorithm}{Algorithms}
\crefname{defn}{Definition}{Definitions}
\crefname{figure}{Fig.}{Fig.}
\Crefname{figure}{Figure}{Figures}
\crefname{equation}{}{}
\title{\LARGE \bf
Physics-informed Learning for Passivity-based Tracking Control
}
\author{Thomas Beckers and Leonardo Colombo
\thanks{Thomas Beckers is with the Department of Computer Science, Vanderbilt University, Nashville, TN 37212, USA {\tt\small thomas.beckers@vanderbilt.edu}}%
\thanks{L. Colombo is with Centre for Automation and Robotics (CSIC-UPM), Ctra. M300 Campo Real, Km 0,200, Arganda
del Rey - 28500 Madrid, Spain.{\tt\small leonardo.colombo@csic.es}}%
}
\begin{document}

\maketitle
\thispagestyle{empty}
\pagestyle{empty}

\begin{abstract}
Passivity-based control ensures system stability by leveraging dissipative properties and is widely applied in electrical and mechanical systems. Port-Hamiltonian systems (PHS), in particular, are well-suited for interconnection and damping assignment passivity-based control (IDA-PBC) due to their structured, energy-centric modeling approach. However, current IDA-PBC faces two key challenges: (i) it requires precise system knowledge, which is often unavailable due to model uncertainties, and (ii) it is typically limited to set-point control. To address these limitations, we propose a data-driven tracking control approach based on a physics-informed model, namely Gaussian process Port-Hamiltonian systems, along with the modified matching equation. By leveraging the Bayesian nature of the model, we establish probabilistic stability and passivity guarantees. A simulation demonstrates the effectiveness of our approach.
\end{abstract}

\section{Introduction}
Passivity-based control is a widely used approach in control theory that ensures stability by leveraging the inherent dissipative properties of a system~\cite{ortega1997passivity}. Rooted in the concept of passivity, this method formulates control laws that prevent energy generation within the system, thus naturally leading to stable and robust behavior. Passivity-based control has been successfully applied in various engineering domains, including robotics~\cite{hatanaka2015passivity}, power systems~\cite{sira1997passivity}, and mechanical structures~\cite{acosta2005interconnection}, due to its ability to handle nonlinearities and external disturbances effectively.

Passivity-based control modifies a system’s energy function and interconnection structure to achieve desired dynamics, making port-Hamiltonian systems (PHS) a natural framework for such control strategies. PHS provide an energy-centric modeling approach that is widely used in control theory, robotics, mechanical engineering, and related fields~\cite{van2000l2}. PHS models represent physical systems as interconnected subsystems, each corresponding to a specific physical domain, such as electronics or mechanics. The Hamiltonian function describes the energy of each subsystem, while power-based port variables define the interconnections, governing energy exchange between components. This structured representation enables a systematic and intuitive analysis of complex physical systems, facilitating the design of efficient and robust control strategies. Due to these advantages, PHS-based control has gained significant research attention in recent years.

In this context, interconnection and Damping Assignment Passivity-Based Control (IDA-PBC) is a widely used control strategy that exploits the energy-based structure of physical systems to achieve stability and desired closed-loop behavior~\cite{ortega2002interconnection}. In this approach, the system’s dynamics are reshaped into a PHS by 
\begin{figure}[t]
\begin{center}
\vspace{0.1cm}
	\includegraphics[width=0.85\columnwidth]{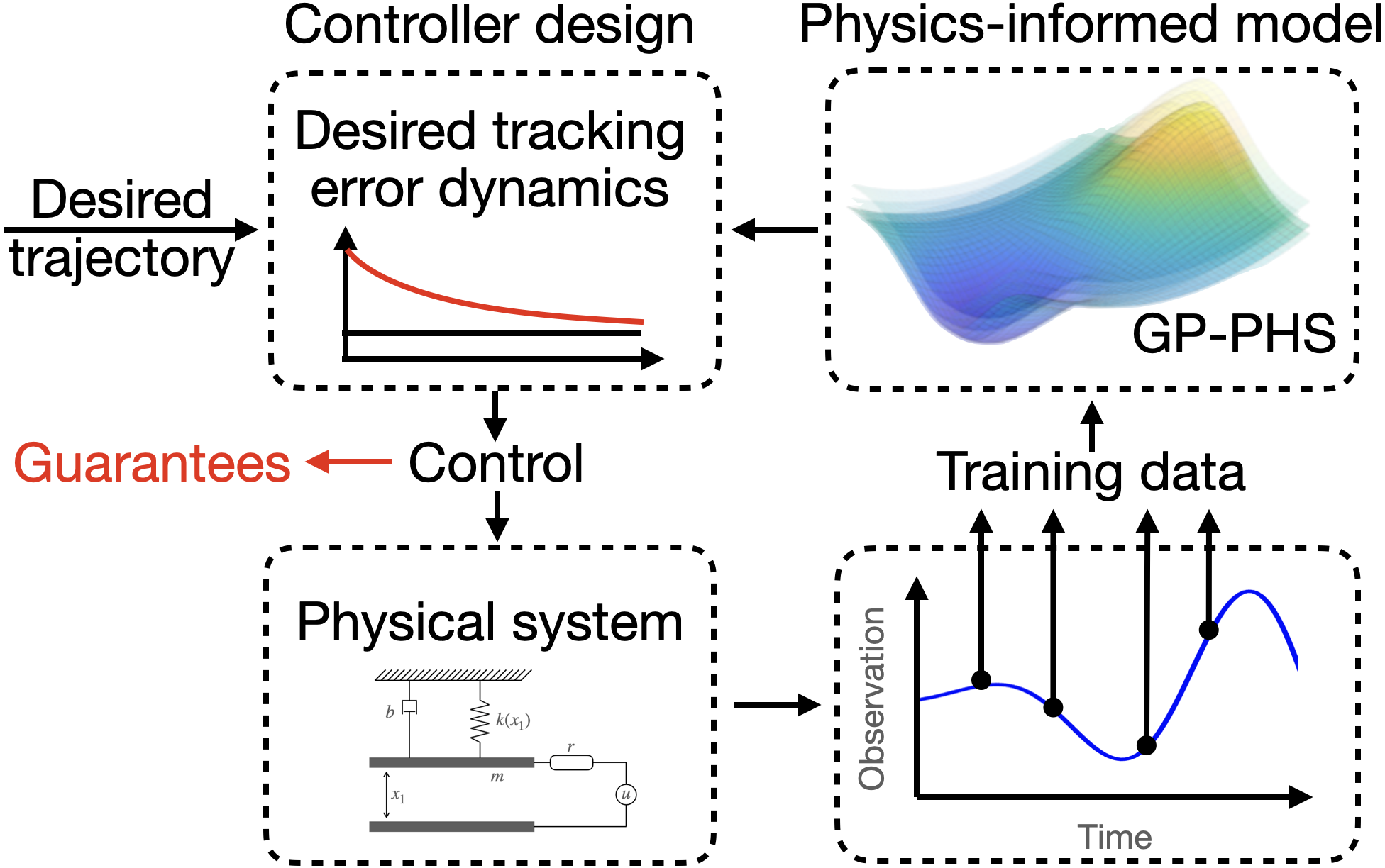}
	\vspace{-0.2cm}\caption{Overview of the GP-PHS based tracking control approach. First, a GP-PHS model is used to learn the partially unknown dynamics of the physical systems. Then, a modified matching equations is used to design a tracking controller that is robustified against the modeling errors, enabling stability and passivity guarantees.}\vspace{-0.7cm}
	\label{fig:gpcphs}
\end{center}
\end{figure}
appropriately modifying the interconnection and damping structure while preserving passivity. The control design consists of two key steps: interconnection assignment, which defines the energy exchange between subsystems to ensure stability, and damping assignment, which introduces dissipative elements to regulate energy flow and enhance robustness. It has been successfully applied to a wide range of systems, including mechanical systems, power systems, and biological systems~\cite{gomez2004physical,ortega2004interconnection,9483212}. Despite its effectiveness, IDA-PBC requires an accurate model of the system’s Hamiltonian function, interconnection matrix, and damping structure, making it challenging to apply in scenarios with significant uncertainties or complex dynamics. Furthermore, traditional IDA-PBC relies on a matching condition, which is primarily suited for shaping potential energy. In contrast, for reference tracking or regulation of many underactuated systems such as the inertial wheel pendulum, see~\cite{wang2008modified}, modifying kinetic energy is also often necessary. Thus, the goal of this paper is to design a passivity-based tracking controller for systems with partially unknown dynamics. 

In \cite{nageshrao2015port}, the authors provide a comprehensive summary of data-driven control approaches for port-Hamiltonian systems. Adaptive control strategies for PHS have been explored in \cite{wang2007simultaneous, dirksz2010adaptive}, while reinforcement learning-based control is presented in \cite{sprangers2014reinforcement}. However, these methods are primarily limited to handling parametric uncertainties and lack rigorous closed-loop stability analysis and passivity guarantees. Alternatively, iterative learning control (ILC) \cite{fujimoto2003iterative} and repetitive control (RC) \cite{fujimoto2004iterative} enable PHS control without requiring a priori system knowledge, but they depend on repetitive executions, which may not be practical in all applications. A robust controller against model uncertainties is introduced in \cite{ryalat2018robust}, yet it suffers from the inherent trade-off between robustness and performance. Additionally, the authors of~\cite{duong2021hamiltonian} propose a Hamiltonian-based Neural ODE approach for tracking control on $SE(3)$ manifolds, but it lacks stability and passivity guarantees, limiting its applicability in safety-critical systems. In \cite{beckers2023data}, we introduced a Gaussian Process Port-Hamiltonian Systems (GP-PHS)-based control approach. However, this method is not generally applicable to tracking control problems.

\textbf{Contribution:} We propose a data-driven passivity-based tracking control approach for physical systems with partially unknown dynamics. First, the unknown dynamics are modeled by a physics-informed learning approach, namely GP-PHS. Based on this model, we use a modified version of the IDA-PBC matching equation to prove the stability of the closed-loop system. Since some of the constraints might be hard to satisfy in practice, we further relax some assumptions but still guarantee probabilistic boundedness of the tracking error and semi-passivity for the closed-loop system. Finally, we apply the proposed method to the tracking control problem of a mechanical system with partially unknown dynamics.

The remainder of the paper is structured as follows. We introduce the idea of IDA-PBC and the problem setting in~\cref{sec:def}, followed by the proposed GP-PHS tracking control law in~\cref{sec:ctrl}. Finally, a simulation shows the benefits of the proposed control approach in~\cref{sec:sim}.

\section{Preliminaries}\label{sec:def}
In this section, we introduce the modeling and control of PHS, followed by the problem setting.
\subsection{Port-Hamiltonian Systems}\label{sec:phs}
Port-Hamiltonian systems provide a powerful framework for modeling and control of physical systems by extending classical Hamiltonian mechanics, see~\cite{van2014port}. They naturally incorporate energy conservation, dissipation, and interconnection structures, making them particularly useful for multi-physics systems such as electrical, mechanical, and fluid systems. In general, a PHS is represented by\footnote{Vectors~$\bm a$ and vector-valued functions~$\bm f(\cdot)$ are denoted with bold characters. Matrices are described with capital letters. $I_n$ is the $n$-dimensional identity matrix and $0_n$ the zero matrix. The expression~$A_{:,i}$ denotes the i-th column of $A$.  $\R_{>0}$ denotes the set of positive real numbers, while $\R_{\geq 0}$ is the set of non-negative real numbers. The operator $\nabla_\x$ with $\x\in\R^n$ denotes $[\frac{\partial}{\partial x_1},\ldots,\frac{\partial}{\partial x_n}]^\top$.}
\begin{align}
\begin{split}
\label{eq:phs}
    \dot{\x} &= \left[J(\x) - R(\x)\right] \nabla_\x H( \x) + G(\x) \u \\
    \bm{y} &= G^\top(\x) \nabla_\x H( \x),
    \end{split}
\end{align}
where $\x \in \mathbb{R}^n$ is the state vector, $H(\x): \mathbb{R}^n \to \mathbb{R}$ is the Hamiltonian function representing the system’s total energy, and $J(\x)=-J(\x)\in\R^{n\times n}$ and $0\preceq R(\x)\in\R^{n\times n}$ define the interconnection and dissipation structures, respectively. The input-output pair $(\u\in\R^m, \bm{y}\in\R^m)$ represents external interactions with the system, a pair of variables whose product gives the (generalized) power that is stored or dissipated by the system.

Using Interconnection and damping assignment passivity-based control (IDA-PBC) \cite{ortega2002interconnection}, it is possible to control a PHS in such a way that it behaves as a target dynamics, namely as a new PHS with a desired interconnection matrix~$J_d$, damping matrix~$R_d$ and energy function~$H_d$. The PHS in~\cref{eq:phs} can be rendered to the desired port-Hamiltonian dynamics described by
\begin{align}
    \dot{\x} = \left[J_d(\x) - R_d(\x)\right] \nabla_\x H_d( \x)
    \label{eq:target_dynamics}
\end{align}
with the control law
\begin{align}
    u = &(G^\top(\x)G(\x))^{-1} G^\top(\x) 
    \Big( [J_d(\x) - R_d(\x)] \nabla_\x H_d( \x)   \notag \\
    &- (J(\x) - R(\x)) \nabla_\x H( \x)\Big),
    \label{eq:control_law}
\end{align}
if and only if the matching equation 
\begin{align}
\begin{split}
    &G^\perp (\x) \Big( [J_d(\x) - R_d(\x)] \nabla_\x H_d( \x) \Big)\\
    =&G^\perp (\x) \Big( [J(\x) - R(\x)] \nabla_\x H( \x) \Big)
    \label{eq:matching_equation}
\end{split}
\end{align}
holds, where $G^\perp (\x)$ is the full rank left annihilator of $G(\x)$. Thus, we can derive a passivity-based control law for the system~\cref{eq:phs} by solving the matching equation~\eqref{eq:matching_equation}. Even though there is no general closed-form solution of~\cref{eq:matching_equation}, there are different strategies to find promising parametrizations of the desired PHS so that the matching equation is satisfied, see~\cite{ortega1997passivity} for more details.
\subsection{Problem Setting}\label{sec:ps}
We consider the problem of designing a tracking control law for a partially unknown physical system whose dynamics can be written in port-Hamiltonian form~\cref{eq:phs}. We assume that we have access to noisy observations $\tilde{\x}(t)\in\R^n$ of the system state $\x(t)\in\R^n$ whose evolution over time $t\in\R_{\geq 0}$ follows~\cref{eq:phs} with $\x(0)\in\R^n$ as the initial state. The Hamiltonian $H\in\C^\infty$ is assumed to be (partially) unknown due to unstructured uncertainties in the system, which are typically hard to model. The parametric structures of the interconnection matrix $J$, dissipation matrix $R$ and I/O matrix $G$ are assumed to be known, but the parameters themselves might be unknown. Given a dataset of timestamps $\{t_i\}_{i=1}^N$, noisy state observations with inputs, $\{\tilde \x(t_i),\bm{u}(t_i)\}_{i=1}^N$ and a desired trajectory $\x_d(t)\in\R^n$, our aim is to design a tracking control law $\bm{u}_c(\x,\x_d)\in\R^m$ that renders the system~\cref{eq:phs} to a desired PHS given by
\begin{align}\label{for:pchmodel}
        \dot{\bar{\x}}=[J_d(\bar{\x})-R_d(\bar{\x})]\nabla_{\bar{\x}} H_d(\x,\x_d),
\end{align}
where  $\bar{\x}=\x-\x_d$ denotes the tracking error. The dataset $\tilde \x(t_i)$ is assumed to be generated according to $\tilde \x(t_i) = \x(t_i) + \bm{\eta}$ where $\x(t)$ comes from the system~\cref{eq:phs} with zero-mean Gaussian noise $\bm{\eta}\sim\mathcal{N}(\bm{0},\diag[\sigma_1,\ldots,\sigma_n])$. The variances $\sigma_1,\ldots,\sigma_n\in\R_{\geq 0}$ might be unknown.

\section{Tracking Control of PHS}
\label{sec:ctrl}
The problem setting leads to two major challenges: i) rendering the system to a desired PHS without having a comprehensive and accurate model of the system itself, and ii) dealing with a tracking control problem. To overcome these problems, we first use a physics-informed learning approach, namely a Gaussian process port-Hamiltonian system (GP-PHS) model to learn the dynamics of the systems. Due to its Bayesian nature, this model provides not only a prediction of the dynamics of the physical system, but also uncertainty quantification. After we have obtained the model, we use a modified version of the matching equation, which allows us to design a control law that is suitable for the tracking problem and robust against the model uncertainty. In the end, we achieve probabilistic guarantees for the stability of the equilibrium of the tracking error dynamics and prove semi-passivity and boundedness under relaxed assumptions.
\subsection{Gaussian Process Port-Hamiltonian System}\label{sec:GPIntro}
A GP-PHS, introduced in~\cite{9992733}, is a probabilistic model for learning partially unknown PHS based on state measurements. The main idea of GP-PHS is to model the unknown Hamiltonian with a GP while treating the parametric uncertainties in $J,R$ and $G$ as hyperparameters. A Gaussian process $\mathcal{GP}(m_{\mathrm{GP}}(\bm{x}), k(\bm{x},\bm{x}^\prime)$ is a stochastic process on some set $\mathcal{X} \subseteq \R^n$ where any finite collection of points $\bm{x}^1,\ldots,\bm{x}^L\in\X$ follows a multivariate Gaussian distribution
\begin{align*}
    \begin{bmatrix}
        f(\bm{x}^1)\\\vdots\\f(\bm{x}^L)
    \end{bmatrix}
    \!\sim\mathcal{N}\!\left(\!\begin{bmatrix}
        m(\bm{x}^1)\\\vdots\\m(\bm{x}^L)
    \end{bmatrix}\!,\!
    \begin{bmatrix}
        k(\bm{x}^1,\bm{x}^1) & \!\ldots\! & k(\bm{x}^1,\bm{x}^L)\\
        \vdots  & \!\ddots\! & \vdots\\ 
        k(\bm{x}^L,\bm{x}^1) & \!\ldots\! & k(\bm{x}^L,\bm{x}^L)
    \end{bmatrix}\!\right)
\end{align*}
with mean function $m_{\mathrm{GP}}: \R^n \rightarrow \R$, kernel function $k: \R^n \times \R^n \rightarrow \R$, and sample $f \sim \mathcal{GP}(m_{\mathrm{GP}}, k)$. By leveraging that GPs are closed under affine operations, the dynamics of a PHS \cref{eq:phs} is integrated into the GP by
\begin{align}
    \dx&\sim \GP({\hat G}(\x\mid\bm{\varphi}_G)\u,k_{phs}(\x,\x^\prime)),\label{for:gpphs}
\end{align}
where the new kernel function $k_{phs}$ is given by
\begin{align*}
    k_{phs}(\x,\x^\prime)&=\sigma_f^2\hat{J}_R(\x\mid \bm{\varphi}_J,\bm{\varphi}_R)\Pi(\x,\x^\prime)\hat{J}_R^\top(\x^\prime\mid \bm{\varphi}_J,\bm{\varphi}_R)\notag\\
    \Pi_{i,j}(\x,\x^\prime) &= \frac{\partial }{\partial z_i \partial z_j}\exp(-\|\z- \z^\prime\|_{\Lambda}^2)\Big\vert_{\z=\x,\z^\prime=\x^\prime}
\end{align*}
with the Hessian $\Pi\colon\R^n\times\R^n\to\R^{n \times n}$ of the squared exponential kernel, see~\cite{rasmussen2006gaussian}. Thus, the dynamics~\cref{for:gpphs} describes a prior distribution over PHS. The matrices $J,R$ and $G$ of the PHS system~\cref{eq:phs} are estimated by $\hat{J}_R(\x\mid \bm{\varphi}_J,\bm{\varphi}_R)=\hat{J}(\x\mid \bm{\varphi}_J)-\hat{R}(\x\mid \bm{\varphi}_R)$ and $\hat{G}(\bm{x}\mid \bm{\varphi}_G)$. The unknown set of parameters is described by $\bm{\varphi}_J\in\Phi_J\subseteq\R^{n_{\varphi_J}},{n_{\varphi_J}}\in\N$ for the estimated interconnection matrix $\hat{J}(x\vert\bm{\varphi}_J)\in\R^{n\times n} $, $\bm{\varphi}_R\in\Phi_R\subseteq\R^{n_{\varphi_R}},{n_{\varphi_R}}\in\N$ for the estimated dissipation matrix $\hat{R}(x\vert\bm{\varphi}_R)\in\R^{n\times n}$ and $\bm{\varphi}_G\in\Phi_G\subseteq\R^{n_{\varphi_G}},{n_{\varphi_G}}\in\N$ for the estimated I/O matrix $\hat{G}(x\vert\bm{\varphi}_G)\in\R^{n\times m}$. Together with the signal noise $\sigma_f\in\R_{>0}$, the lengthscales $\Lambda=\diag(l_1^2,\ldots,l_n^2)\in\R_{>0}^{n}$ of the kernel $k_{phs}$, the parameter vectors $\bm{\varphi}_J,\bm{\varphi}_R,\bm{\varphi}_G$ are treated as hyperparameters.

We start the training of the GP-PHS by using the collected dataset of timestamps $\{t_i\}_{i=1}^N$ and noisy state observations with inputs $\{\tilde \x(t_i),\bm{u}(t_i)\}_{i=1}^N$ of~\cref{eq:phs} in a filter to create a dataset consisting of pairs of states $ X=[\x(t_1),\ldots,\x(t_N)]\in\R^{n\times N}$ and state derivatives $\dot{X}=[\dx(t_1),\ldots,\dx(t_N)]\in\R^{n\times N}$. 
Then, the unknown (hyper)parameters $\bm\varphi$ can be computed by minimization of the negative log marginal likelihood $-\log \prob(\dot{X}\vert \varphi,X)\sim\dot{X}_0^\top K_{phs}^{-1} \dot{X}_0+\log\vert K_{phs} \vert$, with the mean-adjusted output data  $\dot{X}_0=[[\dx(t_1)-\hat{G}\bm{u}(t_1)]^\top,\ldots,[\dx(t_{N})-\hat{G}\bm{u}(t_{N})]^\top]^\top$. Once the GP model is trained, we can compute the posterior distribution using the joint distribution with mean-adjusted output data $\dot{X}_0$ at a test states $\x^*\in\R^n$. Analogously to the vanilla GP regression, the posterior distribution is then fully defined by the mean $\mu\left(\dx\!\mid\!\x^{*}, \D\right)$ and the variance $\var\left(\dx\!\mid\!\x^{*}, \D\right)$. Similarly, the posterior of the estimated Hamiltonian $\hat{H}(\x^*)=\mu\left(H\!\mid\!\x^{*}, \D\right)$ can be achieved.

At this point, we have achieved a GP-PHS model of the system~\cref{eq:phs}, which provides not only a point estimate but also uncertainty quantification. Assuming that the unknown PHS has a bounded RKHS-norm with respect to the PHS kernel $k_{phs}$, i.e., $\Vert (J-R)\nabla H \Vert_{k_{phs}}< \infty$, the model error can be bounded by
 a constant $p\in(0,1)$ such that $P(|\mu\left(\dot{x}_i\!\mid\!\x, \D\right)-[J(\x)-R(\x)]\nabla H(\x)|\leq \beta_i \var\left(\dot{x}_i\!\mid\!\x, \D\right),\forall i\in\{1,\ldots,n\},x\in\X\})\geq 1-p$. 
The bounded RKHS norm is a reasonable assumption in GP learning, as it limits the class of unknown functions to be learned to the class of functions that the GP model can represent. The value of the constant $\beta$ depends on the number and distribution of the training data. See~\cite{srinivas2012information} for further information on the bounded model error. More detailed information on the training process of GP-PHS can be found in~\cite{9992733}.
\subsection{Passivity-based Tracking control using GP-PHS}
After learning a GP-PHS model of the system to be controlled~\cref{eq:phs}, we will use this model in the next step to design a controller that renders the system passive. As introduced in~\cref{sec:def}, this is often achieved by utilizing interconnection and damping assignment passivity-based control. However, traditional IDA-PBC relies on the matching condition~\cref{eq:matching_equation}, which is primarily suited for shaping potential energy~\cite{ortega2002interconnection}. In contrast, for reference tracking or regulation of certain underactuated systems, modifying kinetic energy is also often necessary. Therefore, we will build on an alternative matching equation, so that IDA-PBC can be extended to reference tracking control based on GP-PHS models. Before we state our main result on a passivity-based tracking controller, we recall that the tracking error is defined by $\bar{\x} = \x - \x_d$, where $\x_d$ is the reference trajectory and introduce the following design property for the desired dynamics~\cref{for:pchmodel}.
\begin{propy}\label{propy:1}
    Let $J_d\colon\X\to\R^n$ be a skew-symmetric matrix, $R_d\colon\X\to\R^n$ a positive semi-definite diagonal matrix, and $H_d\colon\X\times\X\to\R$ the desired Hamiltonian with $\min H_d(\x, \x_d)$ at $\x = \x_d$.
\end{propy}
\begin{thm}\label{thm:1}
Let~\cref{for:gpphs} be a GP-PHS model of the physical system~\cref{eq:phs} based on the dataset $\D$. Given the desired dynamics~\cref{for:pchmodel} with~\cref{propy:1} that satisfy\footnote{in the following, we omit the dependency on $\x,\x_d$ for brevity if obvious.}
\begin{align}
   \!\!\!\hat{G}^\perp\mu\left(\dx\!\mid\!\x, \D\right)\!=\!\hat{G}^\perp \Big([J_d(\bar{\x})\!-\! R_d(\bar{\x})]\nabla_{\bar{\x}} H_d(\x,\x_d)\!+\!\dot{\x}_d\Big)\label{for:spde}
\end{align}
where $H_d,R_d$ are designed such that
\begin{align}
    [\nabla_{\bar{\x}} H_d]^\top \bm{\eta}(\x)\leq[\nabla_{\bar{\x}}H_d]^\top R_d(\bar{\x}) \nabla_{\bar{\x}} H_d\label{for:ineq}
    \end{align}
for all $\{\bm\eta(\x)\in\R^n| |\eta_i(\x)|\leq \beta_i  \var\left(\dot{x}_i\mid\x, \D\right),\,\forall i\in\{1,\ldots,n\},\x\in\X\}$. Then, the control input
\begin{align}
    \bm{u}(\x,\x_d)=&[\hat{G}^\top\hat{G}]^{-1}\hat{G}^\top\big[J_d(\bar{\x})-R_d(\bar{\x})]\nabla_{\bar{x}} H_d(\x,\x_d)\notag\\
    &+\dot{\x}_d-\mu\left(\dx\!\mid\!\x, \D\right)\big]\label{for:ctrl}
\end{align}
for the PHS~\cref{eq:phs} leads to a closed-loop system with a stable equilibrium $\bar{\x}$ on $\X$ with a probability of at least $(1-p)$.
\end{thm}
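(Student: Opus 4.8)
The plan is to show that the control law~\eqref{for:ctrl} reduces the tracking-error dynamics to the desired port-Hamiltonian system~\eqref{for:pchmodel} perturbed only by the bounded GP model error, and then to use the desired Hamiltonian $H_d$ as a probabilistic Lyapunov function. First I would substitute~\eqref{for:ctrl} into the true plant~\eqref{eq:phs} and write the unknown drift as $[J(\x)-R(\x)]\nabla_\x H(\x)=\mu(\dx\mid\x,\D)+\bm{\eta}(\x)$, where $\bm{\eta}$ is the GP posterior error from~\cref{sec:GPIntro}. Identifying the learned input matrix with the true one, $P:=\hat G[\hat G^\top\hat G]^{-1}\hat G^\top$ is the orthogonal projector onto $\mathrm{Im}(\hat G)$, and the closed loop becomes
\begin{align*}
\dot{\x}=(I-P)\,\mu(\dx\mid\x,\D)+\bm{\eta}(\x)+P\big([J_d-R_d]\nabla_{\bar{\x}}H_d+\dot{\x}_d\big).
\end{align*}
Since $\hat G^\perp$ is the full-rank left annihilator of $\hat G$, its kernel equals $\mathrm{Im}(\hat G)=\ker(I-P)$, so the matching condition~\eqref{for:spde} says $\mu(\dx\mid\x,\D)-[J_d-R_d]\nabla_{\bar{\x}}H_d-\dot{\x}_d\in\ker(I-P)$. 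Replacing $(I-P)\mu$ accordingly collapses every $P$-term and leaves $\dot{\x}=[J_d-R_d]\nabla_{\bar{\x}}H_d+\dot{\x}_d+\bm{\eta}(\x)$, i.e. $\dot{\bar{\x}}=[J_d(\bar{\x})-R_d(\bar{\x})]\nabla_{\bar{\x}}H_d(\x,\x_d)+\bm{\eta}(\x)$.

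Next I would run a Lyapunov argument with $V(\bar{\x})=H_d(\x,\x_d)-\min H_d$, which by~\cref{propy:1} has a strict minimum at $\bar{\x}=0$ and is therefore a valid, locally positive definite Lyapunov candidate on $\X$. Differentiating along the closed-loop error dynamics and using skew-symmetry of $J_d$ to kill $[\nabla_{\bar{\x}}H_d]^\top J_d\nabla_{\bar{\x}}H_d=0$ gives $\dot V=-[\nabla_{\bar{\x}}H_d]^\top R_d(\bar{\x})\nabla_{\bar{\x}}H_d+[\nabla_{\bar{\x}}H_d]^\top\bm{\eta}(\x)$. On the event of probability at least $1-p$ on which the GP error bound holds, the realized $\bm{\eta}(\x)$ lies in the set over which~\eqref{for:ineq} is assumed, hence $[\nabla_{\bar{\x}}H_d]^\top\bm{\eta}(\x)\le[\nabla_{\bar{\x}}H_d]^\top R_d\nabla_{\bar{\x}}H_d$ and so $\dot V\le 0$ on $\X$. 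Lyapunov's stability theorem then gives stability of the equilibrium $\bar{\x}=0$; since the argument is conditioned on that event, this holds with probability at least $1-p$.

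The step I expect to require the most care is the first one: making the cancellation of the unactuated directions rigorous needs the precise identification $\ker(\hat G^\perp)=\mathrm{Im}(\hat G)$ together with $\hat G[\hat G^\top\hat G]^{-1}\hat G^\top$ being the corresponding orthogonal projector, and it implicitly uses that the learned $\hat G$ matches the true plant matrix $G$ (otherwise $G[\hat G^\top\hat G]^{-1}\hat G^\top$ is not a projector and a residual input-mismatch term must be absorbed into $\bm{\eta}$, which would in turn change the admissible-error set appearing in~\eqref{for:ineq}). A minor additional point is that writing $\dot V=[\nabla_{\bar{\x}}H_d]^\top\dot{\bar{\x}}$ presumes $H_d$ depends on $\x,\x_d$ only through $\bar{\x}$, consistent with~\cref{propy:1} and with the gradient $\nabla_{\bar{\x}}$ in~\eqref{for:pchmodel}, so that the explicit time variation of $\x_d$ contributes no extra term to $\dot V$.
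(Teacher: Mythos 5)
Your proposal is correct and follows essentially the same route as the paper's proof: rewrite the plant as the GP posterior mean plus a bounded perturbation $\bm\eta$ (valid with probability at least $1-p$), use the matching condition~\cref{for:spde} to cancel the unactuated components so that the closed loop reduces to $\dot{\bar{\x}}=[J_d-R_d]\nabla_{\bar{\x}}H_d+\bm\eta$, and then take $H_d$ as a Lyapunov function, invoking skew-symmetry of $J_d$ and~\cref{for:ineq} to conclude $\dot H_d\le 0$ on the high-probability event. Your explicit projector argument for the cancellation, and your observation that the step implicitly identifies $\hat G$ with the true $G$, merely make rigorous what the paper asserts without detail.
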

\begin{proof}
Following~\cite{wang2008modified}, the alternative matching equation for tracking problems is given by $\left[J - R\right] \nabla_\x H = \left[J_d(\bar{\x}) - R_d(\bar{\x})\right] \nabla_{\bar{\x}} H_d (\x, \x_d) + \dot{\x}_d$, where the desired Hamiltonian $H_d(\x, \x_d)$ is designed such that its minimum is where the tracking error vanishes, i.e., $\min H_d(\x, \x_d)$ at $\x = \x_d$. However, we do not have the exact system equations required to compute  the left-hand-side of the alternative matching equation. Instead, the mean prediction of the learned GP-PHS model~\cref{{for:gpphs}} with the probabilistic model error bound is used to rewrite the system~\cref{eq:phs} as 
\begin{align}
    \dx=\mu\left(\dx\!\mid\!\x, \D\right)+\hat G(\x)\bm{u}+\bm{\eta}(\x),\label{prf:uphs}
\end{align}
with perturbation $\bm{\eta}$ as defined in~\cref{thm:1}. Due to the upper-bounded uncertainty shown in~\cref{sec:GPIntro}, there exists a $\bm{\eta}$ with a probability of at least $(1-p)$ such that \cref{eq:phs} equals~\cref{prf:uphs}. Given the satisfaction of the PDE~\cref{for:spde}, the control input~\cref{for:ctrl} applied to~\cref{prf:uphs} leads to
\begin{align}
\dx=&\mu\left(\dx\!\mid\!\x, \D\right)\!+\!\bm{\eta}(\x)\!+\!\hat G\Big(\![\hat{G}^\top\hat{G}]^{-1}\hat{G}^\top[J_d(\bar{\x})\!-\! R_d(\bar{\x})]\notag\\
&\nabla_{\bar{x}} H_d(\x,\x_d)+\dot{\x}_d-\mu\left(\dx\!\mid\!\x, \D\right)\Big)\\
   \Leftrightarrow \dot{\bar{\x}} =&[J_d(\bar{\x})-R_d(\bar{\x})]\nabla_{\bar{x}} H_d(\x,\x_d)+\bm\eta(\x).\label{for:clPCH}
\end{align}
Finally, we choose $H_d$ as a Lyapunov-like function to prove that $\bar{\x}^*=0$ is a stable equilibrium of the closed-loop PHS~\cref{for:clPCH}. The evolution of $H_d$ is given by
\begin{align}
    \dot{H}_d&=[\nabla_{\bar{x}} H_d]^\top [J_d(\bar{\x})-R_d(\bar{\x})] \nabla_{\bar{x}} H_d+[\nabla_{\bar{x}} H_d]^\top \bm{\eta}(\x)\notag\\
    &=-[\nabla_{\bar{x}} H_d]^\top R_d(\bar{\x}) \nabla_{\bar{x}} H_d+[\nabla_{\bar{x}} H_d]^\top \bm{\eta}(\x).\label{for:evolHd}
\end{align}
With~\cref{for:ineq}, that leads to $P(\dot{H}_d\leq 0)\geq 1-p$, which concludes the proof.
\end{proof}
\begin{cor}
The equilibrium $\bar{\x}^*$ will be asymptotically stable with probability $1-p$ if, in addition to~\cref{thm:1}, $\bar{\x}^*$ is an isolated
minimum of $H_d$ and the largest invariant set under the closed-loop dynamics \cref{for:clPCH} contained in
\begin{align}
    \bar{\X}=\{\bar{\x}\in\X\vert[\nabla H_d]^\top R_d(\bar{\x})\nabla H_d=0\}
\end{align}
equals the desired equilibrium $\{\x_d\}$. 
\end{cor}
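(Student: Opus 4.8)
The plan is to upgrade the Lyapunov-stability argument of \cref{thm:1} to an asymptotic-stability argument via LaSalle's invariance principle, applied pathwise on the event (of probability at least $1-p$) on which the model error bound holds. On that event, the closed-loop dynamics are exactly~\cref{for:clPCH}, $H_d$ is a valid Lyapunov-like function with $\dot H_d \le 0$ by~\cref{for:ineq}, and $\bar{\x}^*$ is an isolated minimum of $H_d$ by hypothesis, so $H_d$ is locally positive definite around $\bar{\x}^*$. First I would fix a sublevel set $\Omega_c = \{\bar{\x} : H_d(\bar{\x}) \le c\}$ small enough to be compact, contained in $\X$, and to contain no other critical point of $H_d$; on $\Omega_c$, $H_d$ is nonincreasing along trajectories, so $\Omega_c$ is forward invariant and trajectories starting in it are bounded.

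Next I would invoke LaSalle: every bounded trajectory converges to the largest invariant set $\M$ contained in $\{\bar{\x} \in \Omega_c : \dot H_d = 0\}$. From~\cref{for:evolHd} together with~\cref{for:ineq}, $\dot H_d = 0$ forces $[\nabla_{\bar{\x}} H_d]^\top R_d(\bar{\x}) \nabla_{\bar{\x}} H_d = 0$ (the inequality~\cref{for:ineq} is saturated, hence both the dissipation term and the perturbation term vanish), so the set in question is exactly $\bar{\X} \cap \Omega_c$. By the standing hypothesis, the largest invariant set under~\cref{for:clPCH} inside $\bar{\X}$ is $\{\x_d\}$, i.e. $\bar{\x}^* = 0$; intersecting with $\Omega_c$ does not enlarge it. Hence $\M = \{\bar{\x}^*\}$ and every trajectory in $\Omega_c$ converges to $\bar{\x}^*$. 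Combined with the stability already established in \cref{thm:1}, this yields (local) asymptotic stability of $\bar{\x}^*$. Since all of this holds on the probability-$(1-p)$ event, the conclusion holds with probability at least $1-p$.

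One subtlety I would be careful about: the perturbation $\bm\eta(\x)$ is a fixed (though unknown) realization on the good event, so~\cref{for:clPCH} is an honest autonomous ODE and LaSalle applies directly; I would phrase the whole argument conditionally on that event rather than trying to reason about a stochastic differential inclusion. A second point worth a sentence is that $\dot H_d = 0$ pins down $[\nabla H_d]^\top \bm\eta = [\nabla H_d]^\top R_d \nabla H_d$ and the latter equals $0$, so the residual drift $\bm\eta$ is consistent with the trajectory staying in $\bar{\X}$; the hypothesis on the largest invariant set is exactly what rules out a nontrivial such trajectory.

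The main obstacle is mostly one of careful bookkeeping rather than a deep technical gap: making sure the compact forward-invariant sublevel set exists (which needs $\bar{\x}^*$ to be an \emph{isolated} minimum, as assumed, so that $H_d$ is locally positive definite and has compact sublevel sets near $\bar{\x}^*$), and making the LaSalle invocation clean despite the presence of the unknown-but-fixed perturbation $\bm\eta$. I expect no serious difficulty beyond stating these conditions precisely and noting that everything is conditioned on the $(1-p)$-probability event from \cref{sec:GPIntro}.
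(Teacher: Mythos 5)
Your route is, in substance, the same as the paper's: the paper disposes of this corollary by citing Proposition~1 of \cite{ortega2002interconnection}, and that proposition is proved by exactly the LaSalle argument you write out (compact sublevel set of $H_d$ around the isolated minimum, forward invariance from $\dot H_d\le 0$, convergence to the largest invariant set in $\{\dot H_d=0\}$). So you are not taking a different approach so much as unpacking the citation, which is arguably more informative here, since the cited result is for the nominal dynamics whereas the closed loop \cref{for:clPCH} carries the perturbation $\bm\eta$.

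However, one step in your unpacking does not hold as stated. You claim that $\dot H_d=0$ forces $[\nabla_{\bar{\x}}H_d]^\top R_d(\bar{\x})\nabla_{\bar{\x}}H_d=0$ because ``the inequality \cref{for:ineq} is saturated, hence both the dissipation term and the perturbation term vanish.'' Saturation of $a\le b$ gives $a=b$, not $a=b=0$. From \cref{for:evolHd}, $\dot H_d=0$ is equivalent to $[\nabla_{\bar{\x}}H_d]^\top\bm\eta=[\nabla_{\bar{\x}}H_d]^\top R_d\nabla_{\bar{\x}}H_d$, and nothing prevents both sides from equaling a strictly positive number at points where the dissipation is nonzero; \cref{for:ineq} only guarantees $\le$. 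What is true is the containment $\bar{\X}\subseteq\{\dot H_d=0\}$ (since \cref{for:ineq} holds for every $\bm\eta$ in the symmetric uncertainty set, it holds for $\pm\bm\eta$, so $[\nabla H_d]^\top R_d\nabla H_d=0$ forces $[\nabla H_d]^\top\bm\eta=0$), which is the wrong direction for LaSalle: the corollary's hypothesis constrains the invariant sets inside $\bar{\X}$, but LaSalle sends trajectories to the largest invariant set in the possibly larger set $\{\dot H_d=0\}$. To close the argument you need either (i) strict inequality in \cref{for:ineq} wherever $[\nabla H_d]^\top R_d\nabla H_d>0$, which makes $\{\dot H_d=0\}=\bar{\X}$, or (ii) to run LaSalle with $\{\dot H_d=0\}$ in place of $\bar{\X}$ and strengthen the hypothesis accordingly. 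This wrinkle is invisible in the paper's one-line proof because Proposition~1 of \cite{ortega2002interconnection} concerns the unperturbed loop, where $\dot H_d=-[\nabla H_d]^\top R_d\nabla H_d$ exactly. A secondary caveat: for a genuinely time-varying reference, \cref{for:clPCH} is non-autonomous through $\x_d(t)$ (both $H_d(\x,\x_d)$ and $\bm\eta(\bar{\x}+\x_d(t))$ depend on $t$), so your claim that it is ``an honest autonomous ODE'' holds only for regulation to a constant $\x_d$; otherwise a Barbalat/LaSalle--Yoshizawa-type argument is needed in place of the classical invariance principle.
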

\begin{proof}
    That is a direct consequence of Proposition 1 in~\cite{ortega2002interconnection}.
\end{proof}
\Cref{thm:1} states that the control law~\cref{for:ctrl} ensures that~$\bar{\x}^*=0$ is a stable equilibrium of the closed-loop system. In addition, the closed-loop dynamics follows the desired PHS~\cref{for:pchmodel} affected by a perturbation~$\bm{\eta}$ that depends on the uncertainty of the GP-PHS model. Thus, a more accurate model, which typically comes with more informative training data, reduces $\bm\eta$ so that the dynamics of the closed-loop~\cref{for:clPCH} converges to the desired dynamics (a perfect model would lead to~$\bm\eta=0$), see~\cite{beckers2023data}. In fact,~\cref{thm:1} ensure that the controller is ``robustified'' against the model error. However, the inequality~\cref{for:ineq} might be challenging to satisfy as the model error~$\bm\eta$ has to converge to zero for~$\x\to\x_d$. In the following, we will relax this condition, leading to boundedness and semi-passivity guarantees. 
\begin{lem}\label{lem:1}
Let~\cref{for:gpphs} be a GP-PHS model of the physical system~\cref{eq:phs} based on the dataset $\D$. Given the desired dynamics~\cref{for:pchmodel} with~\cref{propy:1} that satisfy~\cref{for:spde}, where $H_d$ is radially unbounded and is designed such that
\begin{align}
    [\nabla_{\bar{\x}} H_d]^\top \bm{\eta}(\x)&\leq[\nabla_{\bar{\x}}H_d]^\top R_d(\bar{\x}) \nabla_{\bar{\x}} H_d\notag\\\text{for all } \Vert\x-\x_d \Vert&\geq \epsilon\label{for:ineq2}
    \end{align}
with a constant $\epsilon>0$ for all $\{\bm\eta(\x)\in\R^n| |\eta_i(\x)|\leq \beta_i  \var\left(\dot{x}_i\mid\x, \D\right),\,\forall i\in\{1,\ldots,n\},\x\in\X\}$. Then, the control input~\cref{for:ctrl} for the PHS~\cref{eq:phs} leads to a closed-loop system with a bounded tracking error $\bar{\x}$ on $X$ with a probability of at least $(1-p)$.
\end{lem}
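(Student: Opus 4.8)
The plan is to reuse the closed-loop computation from the proof of \cref{thm:1} verbatim up through \cref{for:clPCH}: under \cref{for:spde} and with the control input \cref{for:ctrl}, the closed-loop tracking-error dynamics is $\dot{\bar{\x}} = [J_d(\bar{\x}) - R_d(\bar{\x})]\nabla_{\bar{\x}} H_d(\x,\x_d) + \bm\eta(\x)$, and this identity holds with probability at least $(1-p)$ by the probabilistic model-error bound recalled in \cref{sec:GPIntro}. Nothing about that step changes; only the subsequent Lyapunov argument must be adapted, since we can no longer conclude $\dot H_d \le 0$ everywhere on $\X$.

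Next I would take $H_d$ as a Lyapunov-like function and compute its derivative exactly as in \cref{for:evolHd}, obtaining $\dot H_d = -[\nabla_{\bar{\x}} H_d]^\top R_d(\bar{\x})\nabla_{\bar{\x}} H_d + [\nabla_{\bar{\x}} H_d]^\top \bm\eta(\x)$. By the relaxed hypothesis \cref{for:ineq2}, this gives $\dot H_d \le 0$ whenever $\Vert \x - \x_d\Vert \ge \epsilon$, again with probability at least $(1-p)$. Because $H_d$ attains its minimum at $\bar{\x}=0$ (\cref{propy:1}) and is radially unbounded, the sublevel sets of $H_d$ are compact and the set $\{\bar\x : \Vert\bar\x\Vert \le \epsilon\}$ is bounded; hence $c := \max_{\Vert\bar\x\Vert\le\epsilon} H_d(\x,\x_d)$ is finite (here I would invoke continuity of $H_d$ and compactness of the closed ball, noting the intersection with $\X$ if $\X$ is not all of $\R^n$). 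I would then argue in the standard ultimate-boundedness fashion: the sublevel set $\Omega_c = \{\bar\x : H_d \le c\}$ is compact, forward invariant (any trajectory on its boundary has either $\Vert\bar\x\Vert\le\epsilon$, so it is already ``inside'' the relevant ball, or $\Vert\bar\x\Vert>\epsilon$, where $\dot H_d\le 0$ pushes $H_d$ back down), and every trajectory starting outside $\Omega_c$ has $\dot H_d \le 0$ until it enters $\Omega_c$. Therefore the tracking error $\bar\x$ remains in the bounded set $\Omega_c \cup \Omega_{c_0}$, where $c_0 = H_d(\x(0),\x_d(0))$, for all time, with probability at least $(1-p)$, which is the claimed boundedness.

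The main obstacle is not the Lyapunov bookkeeping but making the forward-invariance argument airtight when $\X$ is a proper subset of $\R^n$ and when $\bm\eta$, $\x_d$ (hence $\bar\x$) are all time-varying: one must ensure the trajectory does not leave $\X$ before the boundedness conclusion applies, and that the sublevel set used to define the bound is genuinely compact inside $\X$. I would handle this by assuming (as is implicit in the theorem's phrasing ``on $\X$'') that $\X$ contains the relevant sublevel set of $H_d$, or equivalently restrict attention to the largest sublevel set contained in $\X$; radial unboundedness of $H_d$ makes this assumption mild. A second minor subtlety is that \cref{for:ineq2} must hold for \emph{all} admissible $\bm\eta$ in the stated envelope, so the sign of $\dot H_d$ outside the $\epsilon$-ball is controlled uniformly in the realization of the model error — this is exactly what the quantifier in the hypothesis provides, so the probability $(1-p)$ enters only once, through the event that \cref{eq:phs} coincides with \cref{prf:uphs}. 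With these points addressed, the argument closes.
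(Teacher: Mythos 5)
Your proposal is correct and follows essentially the same route as the paper: reuse the closed-loop derivation from \cref{thm:1} to obtain \cref{for:clPCH}, compute $\dot H_d$ as in \cref{for:evolHd}, and apply \cref{for:ineq2} to get $\dot H_d\le 0$ outside the $\epsilon$-ball with probability at least $(1-p)$. The paper stops there and simply asserts boundedness, whereas you supply the standard ultimate-boundedness bookkeeping (compact sublevel sets from radial unboundedness, forward invariance of $\Omega_c$) that the paper leaves implicit — a welcome but not divergent elaboration.
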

\begin{proof}
    We follow the proof of~\cref{thm:1} and choose~$H_d$ as a Lyapunov-like function, resulting in the evolution of $H_d$ as in~\cref{for:evolHd}. Using~\cref{for:ineq2}, the Lyapunov-like function is decreasing outside a ball around the desired trajectory, i.e., 
\begin{align}
    P(\dot{H}_d\leq 0,\,\forall \Vert\x-\x_d\Vert\geq\epsilon)\geq 1-p,
\end{align}
which proves the boundedness of the tracking error $\bar{\x}$ on $\X$.
\end{proof}
\Cref{lem:1} allows us to have a non-zero model error around the desired trajectory $\x_d$, which, however, leads to the convergence of the tracking error to a \textit{neighborhood} of zero. The size of the neighborhood depends again on the model error and can be explicitly computed by using level-set methods, see~\cite{michel2008stability}. Finally, we prove semi-passivity, which means that the closed-loop system behaves comparable to passive systems outside a ball around the desired trajectory, see~\cite{pogromsky1998passivity}.
\begin{cor}\label{cor:1}
Let $\u_{ex}\in\R^m$ be an external input and the conditions in~\cref{lem:1} be satisfied, than the control input  
\begin{align}
    \bm{u}(\x,\x_d)=&[\hat{G}^\top\hat{G}]^{-1}\hat{G}^\top\big[J_d(\bar{\x})-R_d(\bar{\x})]\nabla_{\bar{x}} H_d(\x,\x_d)\notag\\
    &+\dot{\x}_d-\mu\left(\dx\!\mid\!\x, \D\right)\big]+\u_{ex}\label{for:passcontrol}
\end{align}
renders the system \cref{eq:phs} semi-passive with probability (1-p) with respect to the input $\u_{ex}$ and output $\bm{y}_{ex}=\hat{G}^\top(\bar{\x})\nabla_{\bar{\x}} H_d$.
\end{cor}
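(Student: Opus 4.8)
\textbf{Proof proposal for \cref{cor:1}.}

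The plan is to mimic the proof of \cref{thm:1} and \cref{lem:1}, but now keep the external input $\u_{ex}$ present in the closed loop and read off a dissipation inequality of semi-passive type with $H_d$ playing the role of the storage function. First I would substitute the control law \cref{for:passcontrol} into the model representation \cref{prf:uphs}; because \cref{for:passcontrol} is just \cref{for:ctrl} plus $\u_{ex}$, the same cancellation that produced \cref{for:clPCH} goes through, and the only surviving extra term is $\hat G \u_{ex}$. Hence, with probability at least $(1-p)$ the closed loop becomes
\begin{align}
    \dot{\bar{\x}} = [J_d(\bar{\x})-R_d(\bar{\x})]\nabla_{\bar{\x}} H_d(\x,\x_d)+\bm{\eta}(\x)+\hat G(\bar{\x})\u_{ex}.\label{prf:clsemi}
\end{align}

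Next I would differentiate $H_d$ along \cref{prf:clsemi}. Using skew-symmetry of $J_d$, exactly as in \cref{for:evolHd}, the $J_d$-term drops out, and the $\hat G \u_{ex}$-term reproduces the output $\bm{y}_{ex}=\hat G^\top(\bar{\x})\nabla_{\bar{\x}} H_d$ through $[\nabla_{\bar{\x}} H_d]^\top \hat G \u_{ex} = \bm{y}_{ex}^\top \u_{ex}$. This gives
\begin{align}
    \dot{H}_d = -[\nabla_{\bar{\x}} H_d]^\top R_d(\bar{\x})\nabla_{\bar{\x}} H_d + [\nabla_{\bar{\x}} H_d]^\top\bm{\eta}(\x) + \bm{y}_{ex}^\top \u_{ex}.\label{prf:Hdotsemi}
\end{align}
Now I would invoke \cref{for:ineq2}: for all $\Vert\x-\x_d\Vert\geq\epsilon$ the sum of the first two terms on the right of \cref{prf:Hdotsemi} is nonpositive with probability at least $(1-p)$, so $\dot{H}_d \leq \bm{y}_{ex}^\top \u_{ex}$ outside the ball of radius $\epsilon$. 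Defining the nonnegative function $S:=H_d - \min H_d \ge 0$ (which equals $H_d$ up to a constant, using \cref{propy:1}), this is precisely the semi-passivity inequality of \cite{pogromsky1998passivity}: there exists a radially unbounded storage function $S$ and a compact set (the $\epsilon$-ball around $\x_d$) outside of which $\dot S \leq \bm{y}_{ex}^\top \u_{ex}$. Radial unboundedness of $S$ is inherited from that of $H_d$ assumed in \cref{lem:1}, and on the complement of the $\epsilon$-ball the excess term $H(\bar{\x}):=[\nabla_{\bar{\x}} H_d]^\top\bm{\eta}(\x) - [\nabla_{\bar{\x}} H_d]^\top R_d(\bar{\x})\nabla_{\bar{\x}} H_d$ can be absorbed, while on the $\epsilon$-ball it is bounded. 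Collecting this yields semi-passivity with probability $(1-p)$ with respect to $(\u_{ex},\bm{y}_{ex})$, concluding the proof.

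I expect the only genuinely delicate point to be verifying that \cref{prf:Hdotsemi} matches the precise definition of semi-passivity in \cite{pogromsky1998passivity} rather than ordinary passivity — in particular, exhibiting the function $H(\cdot)$ that is nonnegative outside a ball and showing $\dot S \leq \bm{y}_{ex}^\top\u_{ex} - H(\bar{\x})$ with $H$ of the right sign outside the $\epsilon$-ball; everything else (the cancellation giving \cref{prf:clsemi}, the chain rule for \cref{prf:Hdotsemi}, and the probabilistic bound from \cref{sec:GPIntro}) is routine and already done in \cref{thm:1} and \cref{lem:1}. A minor bookkeeping issue is that the probability statement refers to the event on which $\bm\eta$ realizes the model error uniformly over $\X$, so the $(1-p)$ qualifier attaches to the whole dissipation inequality exactly as in the earlier results.
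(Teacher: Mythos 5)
Your proposal is correct and follows essentially the same route as the paper: substitute the control law to get the perturbed closed loop with the extra $\hat G\u_{ex}$ term, differentiate $H_d$ as the storage function, identify $\bm{y}_{ex}^\top\u_{ex}$, and use \cref{for:ineq2} to show the remaining terms give a function that is positive outside the $\epsilon$-ball (the paper's $h(\bar\x)$ is the negative of the quantity you call $H(\bar\x)$, so just flip that sign when matching the definition in \cite{pogromsky1998passivity}). No substantive difference from the paper's proof.
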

\begin{proof}
The evolution of the storage function $H_d$ can be written as
    \begin{align}
    \dot{H}_d&=\![\nabla_{\bar{x}} H_d]^\top\! \big(\![J_d(\bar{\x})\!-\!R_d(\bar{\x})] \nabla_{\bar{x}} H_d+ \bm{\eta}(\x)\!+\!\hat{G}(\bar{\x})\u_{ex}\big)\notag\\
    &=-[\nabla_{\bar{x}} H_d]^\top R_d(\bar{\x}) \nabla_{\bar{x}} H_d+[\nabla_{\bar{x}} H_d]^\top \bm{\eta}(\x)+\bm{y}_{ex}^\top\u_{ex}\notag\\
    &\leq \bm{y}_{ex}^\top\u_{ex} -h(\bar{\x}), 
    \label{for:evolHd1}
\end{align}
with a function $h\colon\X\to\R$. Given~\cref{for:ineq2}, we know that for all $\Vert \x-\x_d\Vert\geq \epsilon$ it follows $h(\bar{\x})>0$ with probability $1-p$. In consequence, the system behaves passively outside a ball around zero, which concludes the proof for semi-passivity, see~\cite{pogromsky1998passivity}.
\end{proof}
In summary, we show that the tracking error dynamics achieves a stable equilibrium at $\bar{\x}^*=0$ if sufficient dissipation is available to counteract the model uncertainties, as required by~\cref{for:ineq}. However, satisfying~\cref{for:ineq} requires that the model error vanishes as the tracking error approaches zero. In~\cref{lem:1,cor:1}, we relax this condition, allowing for a persistent model error, which results in a bounded tracking error and ensures the semi-passivity of the closed-loop dynamics.

\section{Evaluation}
\label{sec:sim}
We consider the problem of designing a tracking control law for an electrostatic microactuator, as shown in~\cref{fig:micro}. The system's dynamic equations in port-Hamiltonian form are given by
\begin{align}\label{for:sim}
        \dx&=\underbrace{\begin{bmatrix}
            0 & 1 & 0\\ -1 & -b & 0\\0 & 0 & -\frac{1}{r}
        \end{bmatrix}}_{J(\x)-R(\x)}\frac{\partial H}{\partial\x}( \x)+\underbrace{\begin{bmatrix}
            0\\0\\\frac{1}{r}
        \end{bmatrix}}_{G(\x)}u\\
        H(\x)&=\frac{1}{2}10(x_1-x_1^s)^2+\frac{1}{2m}x_2^2+\frac{1}{C(x_1)}x_3^2.\notag
\end{align}
with the air gap $x_1$, the momentum $x_2$ and the charge of the device $x_3$, similar to~\cite{maithripala2003nonlinear}. The system is parametrized by the mass of the plate $m=1$ and the capacity $C(x_1)$ that is a function of the distance between the plates. The steady state of the air gap is $x_1^s=1$ and we assume a linear damping of the plate's movement with positive constant $b=0.5$ and a stiffness of $k=10$ for the spring. The input resistance is $r=1$ and $u$ represents the input voltage, which is the control input of the system. We assume that the Hamiltonian $H$ is \textit{unknown to us}, primarily due to the complexity of modeling the function $C$, which is affected by nonlinearities, side effects of the electric field, and other uncertainties. Following the problem formulation in the paper, the objective is to design a tracking controller that ensures the closed-loop system follows a desired trajectory.

To train the GP-PHS model, we first generate a dataset by exciting the microactuator system~\cref{for:sim} with a sinusoidal input signal, given by \( u(t) = \sin(t) \). The system is initialized at \( \x(0) = [0,0,1]^\top \), and data is collected over a time span of \( 0 \si{\milli\second} \) to \( 20 \si{\milli\second} \) with uniform sampling intervals. In total, 300 data pairs \( \{t_i, \x(t_i)\} \) are recorded, Gaussian noise with zero mean and variance \( \sigma^2 = 0.001 \) is introduced to the dataset to account for measurement uncertainties. 

With this dataset, a GP-PHS model is trained according to~\cite[Algorithm 1]{9992733}, which is the basis for the design of a tracking controller following~\cref{thm:1}. We aim to follow a given desired trajectory for the position of the upper plate~$x_1$. The desired trajectory $x_{d,1}$ is assumed to be
\begin{align}\label{eq:desairgap}
    x_{d,1}(t)=x_1^s-0.01t-0.01\sin(0.8t).
\end{align}
\begin{figure}[t]
\begin{center}
\vspace{0.2cm}
	\includegraphics[width=0.8\columnwidth]{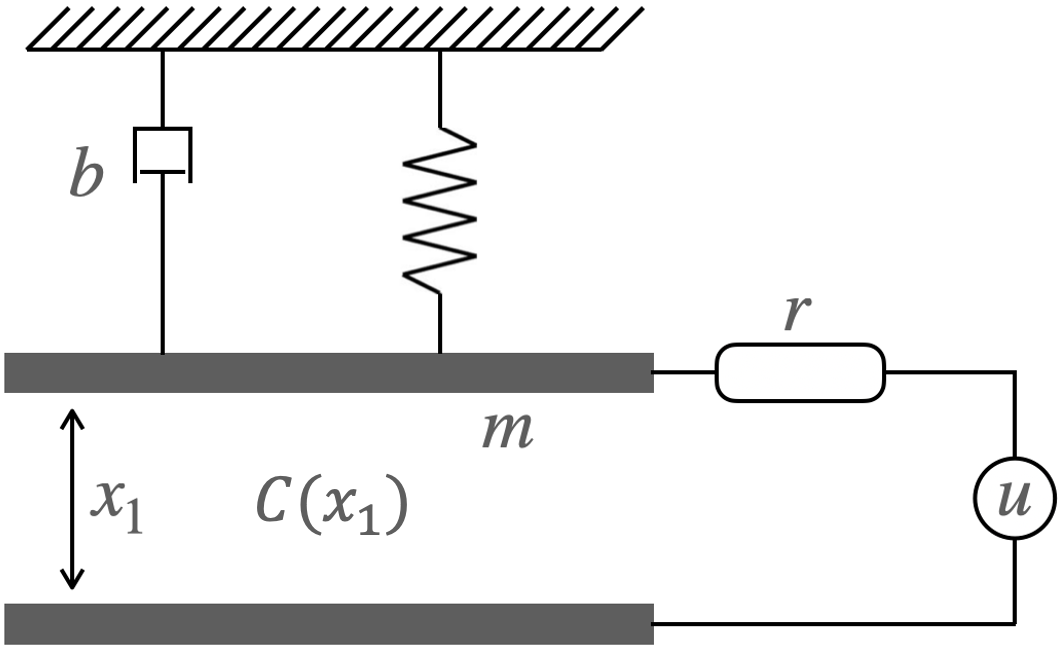}
	\caption{Electrostatic microactuator with unknown capacity $C(x_1)$.}\vspace{-0.6cm}
	\label{fig:micro}
\end{center}
\end{figure}
Next, the controller design requires to solve the matching equation in~\cref{thm:1}. Given the known input matrix $G$, a full-rank left annihilator of $G$ is given by
\begin{align}
    G^\perp(\x)=\begin{bmatrix}
        1 & 0 & 0\\0 & 1 & 0
    \end{bmatrix}\label{eq:anh}
\end{align}
Further, we follow the idea of non-parametric IDA by fixing the desired interconnection and damping matrix to
\begin{align}
    J_d(x)-R_d(x)=\begin{bmatrix}
            0 & 1 & 0\\ -1 & -\hat{b} & 0\\0 & 0 & -\frac{1}{r_d}
        \end{bmatrix},\label{eq:desJR}
\end{align}
where the only change occurs in the damping factor $\frac{1}{r_d}=10$. With~\cref{eq:anh,eq:desJR}, we get the following matching equation~\cref{for:spde}
\begin{align}\label{eq:simmatch}
\nabla_{x_2}\hat{H}(\x)&=\nabla_{\bar{x}_{2}}H_d(\bar{\x})+\dot{x}_{d,1}(t)\\
-\!\nabla_{x_1}\hat{H}(\x)-b\nabla_{x_2}\hat{H}(\x)&=-\!\nabla_{\bar{x}_{1}}H_d(\bar{\x})-b\nabla_{\bar{x}_{2}}H_d(\bar{\x})\notag\\
&\phantom{=}+\dot{x}_{d,2}(t),\notag
\end{align}
 where $\hat{H}(\x)$ is the posterior mean prediction for the Hamiltonian of the trained GP-PHS model, see~\cref{sec:GPIntro}. As candidate for the desired Hamiltonian $H_d$, we use the same posterior mean prediction $\hat{H}(\bar{\x})$ but with the tracking error $\bar{\x}=\x-\x_d$ instead of $\x$. We validate this desired Hamiltonian $\hat{H}(\bar{\x})$ to have its minimum at $\bar{\x}=0$ by point evaluations over a discretized state space, which is set to $\X=[-2,2]^3$. The matching equation~\cref{eq:simmatch} is numerically solved to get 
 $x_{d,2}$ and $x_{d,3}$. Finally, with~\cref{eq:anh,eq:desJR} and $\x_{d}$ in~\cref{for:ctrl}, the control input is computed to
 \begin{align}\label{eq:simctrl}
    u(\x,\x_d)=-\frac{1}{r_d}\nabla_{\bar{x}_{3}}H_d(\bar{\x})+\dot{x}_{d,3}+\frac{1}{r}\nabla_{x_3}\hat{H}(\x).
\end{align}
 
 The top plot of \cref{fig:cl} shows the tracking performance of the closed-loop system with the proposed control law~\cref{eq:simctrl}. As there is uncertainty in the GP-PHS model, the closed-loop trajectory (solid) deviates slightly from the desired air gap $x_{d,1}$ as defined in~\cref{eq:desairgap}. However, according to~\cref{lem:1}, the tracking error remains bounded within a small neighborhood of zero. The evolution of the remaining states, i.e., the momentum and the charge, is visualized in the second plot of \cref{fig:cl}. Finally,~\cref{fig:moredata} supports that the desired Hamiltonian $H_d$ of the tracking error dynamics decreases over time as shown in~\cref{for:evolHd}.
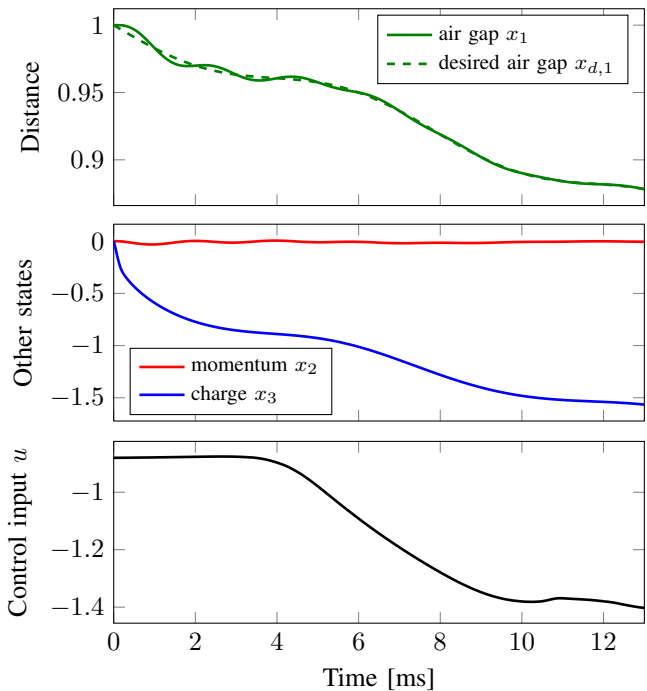
\begin{figure}[t]
\begin{center}
\vspace{0.2cm}
	\tikzsetnextfilename{cl0}
\begin{tikzpicture}
\begin{axis}[
  name=plot1,
  ylabel={Distance},
  legend pos=north west,
  width=\columnwidth,
  height=4.2cm,
  xmin=0,
  xmax=13,
  xticklabels={},
  legend style={font=\footnotesize},
  legend cell align={left},
  legend pos=north east]
\addplot[color=green!50!black,line width=1pt,no marks] table [x index=0,y index=2]{data/control.dat};
\addplot[color=green!50!black,dashed,line width=1pt,no marks] table [x index=0,y index=1]{data/control.dat};
\legend{air gap $x_1$,desired air gap $x_{d,1}$};
\end{axis}
\begin{axis}[
  name=plot2,
   at=(plot1.below south east), anchor=above north east,
  ylabel={Other states},
  legend pos=north west,
  width=\columnwidth,
  height=4.2cm,
  xmin=0,
  xmax=13,
  xticklabels={},
  legend style={font=\footnotesize},
  legend cell align={left},
  legend pos=south west]
\addplot[color=red,line width=1pt,no marks] table [x index=0,y index=3]{data/control.dat};
\addplot[color=blue,line width=1pt,no marks] table [x index=0,y index=4]{data/control.dat};
\legend{momentum $x_2$, charge $x_3$};
\end{axis}
\begin{axis}[
  name=plot3,
   at=(plot2.below south east), anchor=above north east,
  xlabel={Time [ms]},
  ylabel={Control input $u$},
  legend pos=north west,
  width=\columnwidth,
  height=4cm,
  xmin=0,
  xmax=13,
  legend style={font=\footnotesize},
  legend cell align={left},
  legend pos=north east]
\addplot[color=black,line width=1pt,no marks] table [x index=0,y index=1]{data/control_u.dat};
\end{axis}
\end{tikzpicture} 
    \vspace{-0.6cm}
	\caption{Top 1 and 2: Closed-loop system with the proposed tracking control law. As there is uncertainty in the GP-PHS model, the closed-loop dynamics (solid) slightly deviates from the desired PHS (dashed) but remains in a neighborhood. Bottom: Computed control input over time.}\vspace{-0.6cm}
	\label{fig:cl}
\end{center}
\end{figure}
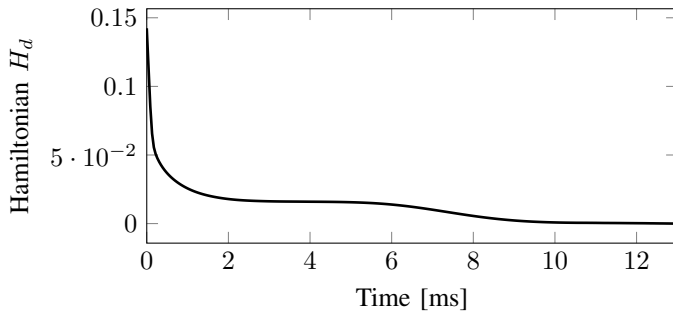
\begin{figure}[ht]
\begin{center}
	\tikzsetnextfilename{error}
\begin{tikzpicture}
\begin{axis}[
  name=plot1,
  xlabel={Time [ms]},
  ylabel={Hamiltonian $H_d$},
  legend pos=north west,
  width=\columnwidth,
  height=4.7cm,
  xmin=0,
  xmax=13,
  legend style={font=\footnotesize},
  legend cell align={left},
  legend pos=north east]
\addplot[color=black,line width=1pt,no marks] table [x index=0,y index=1]{data/Lyap.dat};
\end{axis}
\end{tikzpicture} 
	\vspace{-0.6cm}\caption{The Hamiltonian function of the tracking error dynamics is decreasing as expected.}\vspace{-0.8cm}
	\label{fig:moredata}
\end{center}
\end{figure}
\section*{Conclusion}
This paper presents a data-driven passivity-based tracking control approach for partially unknown physical systems. The system dynamics are learned using a physics-informed model, specifically a Gaussian process port-Hamiltonian system (GP-PHS). Leveraging the port-Hamiltonian structure, we derive conditions for designing a tracking controller. By incorporating robustness against model errors from the GP-PHS, we ensure the stability of the desired tracking error equilibrium and, under relaxed assumptions, its boundedness and the semi-passivity of the closed-loop system. A simulation study demonstrated the effectiveness of the proposed approach. Future work will focus on comparing this method with other data-driven passivity-based approaches and validating its performance on real-world systems.

\bibliographystyle{IEEEtran}
\bibliography{root}

\begin{thebibliography}{10}
\providecommand{\url}[1]{#1}
\csname url@samestyle\endcsname
\providecommand{\newblock}{\relax}
\providecommand{\bibinfo}[2]{#2}
\providecommand{\BIBentrySTDinterwordspacing}{\spaceskip=0pt\relax}
\providecommand{\BIBentryALTinterwordstretchfactor}{4}
\providecommand{\BIBentryALTinterwordspacing}{\spaceskip=\fontdimen2\font plus
\BIBentryALTinterwordstretchfactor\fontdimen3\font minus
  \fontdimen4\font\relax}
\providecommand{\BIBforeignlanguage}[2]{{%
\expandafter\ifx\csname l@#1\endcsname\relax
\typeout{** WARNING: IEEEtran.bst: No hyphenation pattern has been}%
\typeout{** loaded for the language `#1'. Using the pattern for}%
\typeout{** the default language instead.}%
\else
\language=\csname l@#1\endcsname
\fi
#2}}
\providecommand{\BIBdecl}{\relax}
\BIBdecl

\bibitem{ortega1997passivity}
R.~Ortega, Z.~P. Jiang, and D.~J. Hill, ``Passivity-based control of nonlinear
  systems: A tutorial,'' in \emph{Proceedings of the American Control
  Conference}, vol.~5.\hskip 1em plus 0.5em minus 0.4em\relax IEEE, 1997, pp.
  2633--2637.

\bibitem{hatanaka2015passivity}
T.~Hatanaka, N.~Chopra, M.~Fujita, and M.~W. Spong, \emph{Passivity-based
  control and estimation in networked robotics}.\hskip 1em plus 0.5em minus
  0.4em\relax Springer, 2015.

\bibitem{sira1997passivity}
H.~Sira-Ramirez, R.~A. Perez-Moreno, R.~Ortega, and M.~Garcia-Esteban,
  ``Passivity-based controllers for the stabilization of dc-to-dc power
  converters,'' \emph{automatica}, vol.~33, no.~4, pp. 499--513, 1997.

\bibitem{acosta2005interconnection}
J.~A. Acosta, R.~Ortega, A.~Astolfi, and A.~D. Mahindrakar, ``Interconnection
  and damping assignment passivity-based control of mechanical systems with
  underactuation degree one,'' \emph{IEEE Transactions on Automatic Control},
  vol.~50, no.~12, pp. 1936--1955, 2005.

\bibitem{van2000l2}
A.~Van~der Schaft, \emph{L2-gain and passivity techniques in nonlinear
  control}.\hskip 1em plus 0.5em minus 0.4em\relax Springer, 2000.

\bibitem{ortega2002interconnection}
R.~Ortega, A.~Van Der~Schaft, B.~Maschke, and G.~Escobar, ``Interconnection and
  damping assignment passivity-based control of port-controlled {H}amiltonian
  systems,'' \emph{Automatica}, vol.~38, no.~4, pp. 585--596, 2002.

\bibitem{gomez2004physical}
F.~G{\'o}mez-Estern and A.~J. Van~der Schaft, ``Physical damping in ida-pbc
  controlled underactuated mechanical systems,'' \emph{European Journal of
  control}, vol.~10, no.~5, pp. 451--468, 2004.

\bibitem{ortega2004interconnection}
R.~Ortega and E.~Garcia-Canseco, ``Interconnection and damping assignment
  passivity-based control: A survey,'' \emph{European Journal of control},
  vol.~10, no.~5, pp. 432--450, 2004.

\bibitem{9483212}
J.~Lin, N.~V. Divekar, G.~Lv, and R.~D. Gregg, ``Optimal task-invariant
  energetic control for a knee-ankle exoskeleton,'' in \emph{American Control
  Conference (ACC)}, 2021, pp. 5029--5034.

\bibitem{wang2008modified}
Z.~Wang and P.~Goldsmith, ``Modified energy-balancing-based control for the
  tracking problem,'' \emph{IET Control Theory \& Applications}, vol.~2, no.~4,
  pp. 310--322, 2008.

\bibitem{nageshrao2015port}
S.~P. Nageshrao, G.~A. Lopes, D.~Jeltsema, and R.~Babu{\v{s}}ka,
  ``Port-{H}amiltonian systems in adaptive and learning control: A survey,''
  \emph{IEEE Transactions on Automatic Control}, vol.~61, no.~5, pp.
  1223--1238, 2015.

\bibitem{wang2007simultaneous}
Y.~Wang, G.~Feng, and D.~Cheng, ``Simultaneous stabilization of a set of
  nonlinear port-controlled hamiltonian systems,'' \emph{Automatica}, vol.~43,
  no.~3, pp. 403--415, 2007.

\bibitem{dirksz2010adaptive}
D.~A. Dirksz and J.~M. Scherpen, ``Adaptive tracking control of fully actuated
  port-{H}amiltonian mechanical systems,'' in \emph{IEEE International
  Conference on Control Applications}, 2010, pp. 1678--1683.

\bibitem{sprangers2014reinforcement}
O.~Sprangers, R.~Babu{\v{s}}ka, S.~P. Nageshrao, and G.~A. Lopes,
  ``Reinforcement learning for port-{H}amiltonian systems,'' \emph{IEEE
  Transactions on Cybernetics}, vol.~45, no.~5, pp. 1017--1027, 2014.

\bibitem{fujimoto2003iterative}
K.~Fujimoto and T.~Sugie, ``Iterative learning control of {H}amiltonian
  systems: I/o based optimal control approach,'' \emph{IEEE Transactions on
  Automatic Control}, vol.~48, no.~10, pp. 1756--1761, 2003.

\bibitem{fujimoto2004iterative}
K.~Fujimoto, ``On iterative learning control of nonholonomic {H}amiltonian
  systems,'' in \emph{Proc. 16th Symposium on Mathematical Theory of Networks
  and Systems (MTNS2004)}.\hskip 1em plus 0.5em minus 0.4em\relax Citeseer,
  2004.

\bibitem{ryalat2018robust}
M.~Ryalat and D.~S. Laila, ``A robust {IDA-PBC} approach for handling
  uncertainties in underactuated mechanical systems,'' \emph{IEEE Transactions
  on Automatic Control}, vol.~63, no.~10, pp. 3495--3502, 2018.

\bibitem{duong2021hamiltonian}
T.~Duong and N.~Atanasov, ``Hamiltonian-based neural ode networks on the se (3)
  manifold for dynamics learning and control,'' \emph{arXiv preprint
  arXiv:2106.12782}, 2021.

\bibitem{beckers2023data}
T.~Beckers, ``Data-driven {B}ayesian control of port-{H}amiltonian systems,''
  in \emph{62nd IEEE Conference on Decision and Control (CDC)}.\hskip 1em plus
  0.5em minus 0.4em\relax IEEE, 2023, pp. 8708--8713.

\bibitem{van2014port}
A.~Van Der~Schaft and D.~Jeltsema, ``Port-{H}amiltonian systems theory: An
  introductory overview,'' \emph{Foundations and Trends in Systems and
  Control}, vol.~1, no. 2-3, pp. 173--378, 2014.

\bibitem{9992733}
T.~Beckers, J.~Seidman, P.~Perdikaris, and G.~J. Pappas, ``Gaussian process
  port-{H}amiltonian systems: Bayesian learning with physics prior,'' in
  \emph{IEEE Conference on Decision and Control (CDC)}, 2022, pp. 1447--1453.

\bibitem{rasmussen2006gaussian}
C.~E. Rasmussen and C.~K. Williams, \emph{{Gaussian} processes for machine
  learning}.\hskip 1em plus 0.5em minus 0.4em\relax MIT press Cambridge, 2006,
  vol.~1.

\bibitem{srinivas2012information}
N.~Srinivas, A.~Krause, S.~M. Kakade, and M.~W. Seeger, ``Information-theoretic
  regret bounds for {Gaussian} process optimization in the bandit setting,''
  \emph{IEEE Transactions on Information Theory}, vol.~58, no.~5, pp.
  3250--3265, 2012.

\bibitem{michel2008stability}
A.~N. Michel, L.~Hou, and D.~Liu, ``Stability of dynamical systems,''
  \emph{Birkha{\"u}ser, Boston, MA}, 2008.

\bibitem{pogromsky1998passivity}
A.~Y. Pogromsky, ``Passivity based design of synchronizing systems,''
  \emph{International journal of Bifurcation and Chaos}, vol.~8, no.~02, pp.
  295--319, 1998.

\bibitem{maithripala2003nonlinear}
D.~S. Maithripala, J.~M. Berg, and W.~P. Dayawansa, ``Nonlinear dynamic output
  feedback stabilization of electrostatically actuated mems,'' in \emph{42nd
  IEEE International Conference on Decision and Control}, vol.~1.\hskip 1em
  plus 0.5em minus 0.4em\relax IEEE, 2003, pp. 61--66.

\end{thebibliography}

\end{document}